\newtheorem{definition}{Definition}
\newtheorem{theorem}{Theorem}
\newcommand{\encbr}[1]{{\ensuremath\llbracket #1 \rrbracket}}
\newcommand{\enc}{\encbr}
\newcommand{\N}{\ensuremath \mathbb{N}}
\newcommand{\Z}{\ensuremath \mathbb{Z}}
\newcommand{\A}{\ensuremath \mathcal{A}}
\newcommand{\C}{\ensuremath \mathcal{C}}
\newcommand{\D}{\ensuremath \mathcal{D}}
\newcommand{\X}{\ensuremath \mathcal{X}}
\newcommand{\PRF}{\ensuremath \operatorname{PRF}}
\newcommand{\jaro}[1]{{\color{blue} \textbf{Jaro says: #1}}}
\newlength{\wleft}
\newlength{\wmid}
\newlength{\wright}
\renewcommand{\paragraph}{%
  \@startsection{paragraph}{4}%
  {\z@}{0.5ex \@plus 1ex \@minus .2ex}{-1em}%
  {\normalfont\normalsize\bfseries}%
}
\ifijcbfinal\pagestyle{empty}\fi
\begin{document}

\title{Privacy-Preserving Population-Enhanced Biometric Key Generation from Free-Text Keystroke Dynamics -- Extended Version\thanks{This work was supported in part by DARPA Active Authentication grants FA8750-12-2-0201 and FA8750-13-2-0274 and NYIT ISRC 2012 and 2013 grants. The views, findings, recommendations, and conclusions contained herein are those of the authors and should not be interpreted as necessarily representing the official policies or endorsements, either expressed or implied, of the sponsoring agencies or the U.S. Government.}}

\author{Jaroslav \v Sed\v enka\thanks{J. \v{S}ed\v{e}nka (sedenka@mail.muni.cz) is a student of Faculty of Science, Masaryk University, Czech Republic. This work was done while visiting the New York Institute of Technology.} \\
{\footnotesize New York Inst. of Technology}\\
{\tt \footnotesize jsedenka@nyit.edu}
\and
Kiran S.~Balagani\\
{\footnotesize New York Inst. of Technology}\\
{\tt \footnotesize kbalagan@nyit.edu}
\and
Vir Phoha\\
\footnotesize Louisiana Tech University\\
{\tt \footnotesize phoha@latech.edu}
\and
Paolo Gasti\\
{\footnotesize New York Inst. of Technology}\\
{\tt \footnotesize pgasti@nyit.edu}
}

\maketitle
\thispagestyle{empty}

\begin{abstract}
Biometric key generation techniques are used to reliably generate 
cryptographic material from biometric signals. Existing constructions require 
users to perform a particular activity (e.g., type or say a password, or provide 
a handwritten signature), and are therefore not suitable for generating keys 
continuously. 
In this paper we present a new technique for biometric key generation from 
free-text keystroke dynamics. This is the first technique suitable for 
continuous key generation. Our approach is based on a scaled 
parity code for key generation (and subsequent key reconstruction), and can be 
augmented with the use of population data to improve security and reduce key 
reconstruction error. In particular, we rely on linear discriminant analysis 
(LDA) to obtain a better representation of discriminable biometric signals. 

To update the LDA matrix without disclosing  user's biometric information, we 
design a provably secure privacy-preserving protocol (PP-LDA) based on 
homomorphic encryption. Our biometric key generation with PP-LDA was evaluated on a dataset 
of 486 users. We report equal error rate around 5\% when using LDA, and below 7\% without 
LDA.
\end{abstract}

\section{Introduction}

Biometric Key Generation (BKG) harnesses biometric signals to protect 
cryptographic keys against unauthorized access.
Freshly-generated keys are {\em committed} using biometric information; 
subsequently, the same biometric signal (or a ``close enough'' signal) is used 
to reconstruct (i.e., {\em decommit}) a key. 

BKG offers unique and appealing features. Unlike passwords, biometric 
information (and the corresponding key) is tied to a particular user, and as 
such cannot be easily disclosed or stolen (e.g., via shoulder-surfing). 
Easy-to-remember passwords provide only marginal security, while strong 
passwords are difficult to remember. Ideally, by relying on high-entropy and 
consistent biometric signals, BKG is a good candidate for replacing password-based techniques, because it offers a good tradeoff between usability and security. 

Although originally conceived for physical biometrics~\cite{fuzzy_commitment}, 
such as fingerprints and iris, BKG techniques have recently been applied to {\em 
behavioral biometrics}. These techniques include key generation using 
voice~\cite{mon01}, handwritten signatures~\cite{fen02,fre06} and 
keystroke dynamics~\cite{keystroke-BKG}. Due to the inherent 
variability of behavioral signals, current approaches require users to perform 
a specific activity while these signals are collected. For example,  
techniques based on voice recognition require users to pronounce the same 
sentence (i.e., a passphrase) for both committing and decommitting a key; BKG 
based on keystroke dynamics {\em augments} password-based systems by introducing 
additional entropy while the user is typing her password.

This work is the first to introduce BKG on {\em free-text} input. (In free-text 
setting, users are allowed to type or say any text.)  Free-text BKG allows 
periodic key generation using behavioral data collected continuously, since 
collection of biometric signals does not interfere with regular user activity. 
Therefore, free-text based systems can capture biometric signals over a long 
period of time, providing better accuracy and security.

\paragraph{Contributions.}
We propose a novel BKG technique that builds on the fuzzy commitment schemes of 
Juels \etal~\cite{fuzzy_commitment}. 
Our work represents a further step towards a formalization of 
security of BKG techniques. We define new and more realistic requirements for biometric 
signals. In particular, while fuzzy commitments of Juels \etal are secure 
under the unrealistic assumption that all biometric features are uniformly 
distributed, our commitments are provably secure without any assumption on the 
distribution of the biometric features. Instead, we assume that 
biometric signals are an instantiation of an {\em unpredictable function} \cite{unpredictable}, which 
is a well-understood cryptographic tool. 

Furthermore, we extend the error-correcting code to work on arbitrary 
biometrics.
We then improve commitment/decommitment performance by relying on population 
data. We evaluate the feasibility of using Linear Discriminant 
Analysis~\cite{fis36} (LDA) for improving the EER of our technique. Since LDA 
requires data from multiple users, we design a privacy-preserving protocol 
(PP-LDA) that allows users to compute LDA parameters without disclosing their 
biometric signals. The protocol involves two untrusted parties: an {\em enrollment 
server} (ES) and a {\em matrix publisher} (MP). To our knowledge, this work is 
the first to use LDA (and, consequently, PP-LDA) for the purpose of biometric 
key generation.

By using keystroke and digraph latency information, we were able to achieve 
5.5\% false accept rate (FAR) and 3.6\% false reject rate (FRR).

\paragraph{Organization.}
The rest of the paper is organized as follows. In Section~\ref{sec:background} 
we overview the tools and the security model used in this paper. Section 
\ref{sec:techniques} introduces our modification of Fuzzy Commitment, and 
presents our PP-LDA protocol. In Section~\ref{sec:security} we analyze the 
security of our BKG technique and of PP-LDA. Experimental evaluation is 
presented in Section \ref{sec:evaluation}. We summarize related work in 
Section~\ref{sec:related} and conclude in Section~\ref{sec:conclusion}.

\section{Background}
\label{sec:background}

	\paragraph{LDA.}
	\label{sec:background_lda}
	
Linear Discriminant Analysis \cite{fis36} (LDA) is a well known supervised feature 
extraction method. The goal of LDA is to derive a new (and possibly compact) set 
of features from the original feature set, such that the new set provides 
increased class-discriminability.
Formally, LDA finds direction vectors for projections that maximize linear 
separability between classes (`users' in our case). Let $v$ denote the number of 
users, $X_i$ be the $m_{i} \times n$ matrix of data samples containing $m_{i}$ $n-
$dimensional training samples of user $i$. The mean of $X_{i}$ is denoted by the 
row vector $\mu_{i}$ and $\mu$ is the mean of all $\mu_{i}$-s. Let $C_{m}$ denote 
the $m \times m$ centering matrix.

LDA finds the transformation matrix $W = [w_{1};\ldots;w_{n-1}]$ that maximizes the objective function $J(W) = \frac{|W^TS_{b}W|}{|W^TS_{w}W|}$, where the {\em scatter between} term `$S_{b}$' is calculated as $\sum_{i=1}^{v} (\mu_i - \mu)^T(\mu_i - \mu)$ and the {\em scatter within} term `$S_{w}$' is calculated as $\sum_{i=1}^{v}X_{i}^{T}C_{m_i}X_{i}$. It can be easily shown that the transformation matrix $W$ is the solution of the generalized eigenvalue problem $S_{B}W = \Lambda S_{W}W$. After the transformation matrix is obtained, the new $(n-1)$ features are calculated as $XW$.

\paragraph{\bf Fuzzy Commitments and BKG.}
Fuzzy commitments~\cite{fuzzy_commitment} use error correcting codes to construct commitments from noisy information, e.g., biometric signals.
Features are extracted from raw signals (e.g., minutiae from fingerprint images); then, each feature is represented using a single bit, therefore 
constructing an $n$-bit vector (where $n$ is the number of features) for each 
sample. Let $C \subseteq \{ 0, 1 \}^n$ be a group error-correcting code. To 
commit a codeword $c\in{C}$ using biometric $x = (x_1, \ldots, x_n)$, 
the user computes commitment $(\mathrm{H}(c), \delta = x \oplus c)$; the biometric 
key is computed as $k = \mathrm{H}'(c)$, where $\mathrm{H}$ and $\mathrm{H}'$ are 
two collision resistant hash functions.

To decommit the biometric key using biometric sample $y = (y_1, \ldots, y_n)$, 
the user computes codeword $c' = {\sf decode}(y \oplus \delta)$. If $\mathrm{H}(c') = 
\mathrm H(c)$, then the biometric key is computed as $k = \mathrm{H}'(c')$. Otherwise, no information about the key is revealed. Biometric vector $y$ decommits $c$ iff the 
error vector $e = x - y$ decodes to the zero codeword. 

Juels \etal~\cite{fuzzy_commitment} prove that the adversary cannot reconstruct $c$ from a commitment under the assumption that $\mathrm{H}$ and $\mathrm{H}'$ are collision-resistant functions and that $x$ is uniformly distributed in $\{ 0, 1 \}^n$.

\paragraph{\bf Unpredictable functions.}

In contrast with the technique in~\cite{fuzzy_commitment}, we model biometric 
features as unpredictable functions~\cite{unpredictable}. This captures the idea 
that a user's biometric is {\em difficult to guess}. Informally, an 
unpredictable function $f(\cdot)$ is a function for which no efficient adversary 
can predict $f(x^*)$ given $f(x_i)$ for various $x_i \neq x^*$. Formally:

\begin{definition}
A function family $(\C, D, R, F)$ for $\{f_c(\cdot): D \rightarrow R\}_{c \leftarrow \C}$ is unpredictable if for any efficient algorithm $\A$ and auxiliary information $z$ we have:
\[
	Pr[(x^*, f_c(x^*) \leftarrow \A^{f_c(\cdot)}(z) \text{\; and\; } x^* \not\in Q] \leq \operatorname{negl}(\kappa)
\]
where $Q$ is the set of queries from $\A$, $\kappa$ is the security parameter and $\operatorname{negl}(\cdot)$ is a negligible function.
\label{def:unpredictable}
\end{definition}

\paragraph{\bf Homomorphic Encryption.} Our PP-LDA construction uses a 
semantically secure (public key) additively homomorphic encryption scheme. Let $
\encbr{m}$ indicate the encryption of message $m$ using a homomorphic encryption 
scheme. (We omit public keys in our notation, since there is a single 
public/private keypair generated by MP.) We have that $\enc{m_1}\cdot\enc{m_2} = 
\enc{m_1 + m_2}$, which also implies that $\enc{m}^a = \enc{m \cdot a}$. While 
any encryption scheme with the above properties 
suffices for the purposes of this work, the 
construction due to Damg{\aa}rd \etal~\cite{dam08,dam08cor} (DGK hereafter) is 
of particular interest here because it is fast and produces small 
ciphertexts. 	

Fully-homomorphic encryption (FHE) can also be used to instantiate our PP-LDA 
protocol. However, due to the severe performance penalty associated with FHE, we design a protocols that requires only additively homomorphic encryption.

\paragraph{Security Model and Definitions.}
Our protocols are secure in the presence of semi-honest (also known as 
honest-but-curious or passive) participants. In this model, while participants 
follow prescribed protocol behavior, they might try to learn additional information 
beyond that obtained during normal protocol execution.

We use the term {\em adversary} to refer to insiders, i.e., protocol participants. 
This includes the case when one of the parties is compromised. Outside adversaries, e.g., those 
who can eavesdrop on the communication channel, are not considered since their actions can be 
mitigated via standard network security techniques (e.g., by performing all communication over SSL).	

\section{Our Techniques}
\label{sec:techniques}

In this section we introduce our BKG construction. To generate a cryptographic key, the user selects a random codeword $c$ of 
length $n$ from a custom error-correcting code $C$, and uses it to derive 
a cryptographic key as $k = \PRF_c(z)$, where $\PRF$ is a pseudorandom function 
and $z \neq 0$ is either a system-wide public constant or a user-provided 
pin/password for added security. (In our security analysis we assume that the adversary knows $z$) $c$ is then protected using the user's biometrics as follows. 
After collecting keystroke data, we extract $n$ keystroke and digraph features 
$x = (x_1, \ldots, x_n)$, which are then discretized and scaled by their 
standard deviation. The user then computes $\delta = (x-c) = (x_1-c_1, \ldots, x_n-c_n)$ 
and publishes commitment $(\PRF_c(0), \delta)$.

The user can reconstruct the cryptographic key given public parameters $(\PRF_c(0), \delta)$, a biometric signals and possibly a pin/password $z$ as follows. 
The user extract keystroke/digraph features from the sample; then she
constructs vector $y=(y_1,\ldots,y_n)$. Finally, she computes $c' = {\sf decode}(y-\delta)$. 
If $\PRF_{c'}(0) = \PRF_c(0)$, then $k=\PRF_{c'}(z)$ is the correct key with 
overwhelming probability.
In the following, we provide further details on our construction.

\paragraph{\bf Our Construction.} In our scheme, each feature is discretized and mapped to the range $[0,2^d-1]$. In other words, codeword symbols are elements of 
$\Z_{2^d}$. Discretization is performed as:
\[
{\sf discretize}_{d,{\sf F}}(x_j) = \left\lfloor (2^d-1)\left(\frac{ x_j-{\sf min_F}}{{\sf max_F}-{\sf min_F}}\right)\right\rfloor
\]
where $\sf F$ is the feature being discretized, $x_{j}$ is an instance in $\sf F$, 
${\sf min_F}$ is the minimum value of $\sf F$, and ${\sf max_F}$ is the maximum 
value of $\sf F$. The $d$ parameter controls the number of cells a feature is 
discretized into. Therefore, higher the $d$, the lower the potential loss of 
information due to discretization. When  $x_j > {\sf max_F}$ or $x_j < {\sf min_F}$, we set ${\sf discretize}_{d,F}(x_j)$ to $2^d-1$ and $0$, respectively.

Distance between two codewords is not defined via the usual Hamming distance. In 
fact, Hamming distance captures well the ``similarity'' between a bit string and 
its perturbed version when all bits in the string have the same probability of 
being affected by an error. In our setting this is not the case, 
because the least significant bits of each feature instance have higher 
probability of being altered. 

\begin{figure}[t]\resizebox{0.46\textwidth}{!}{
\begin{tabular}{lllll}
\cline{2-5}
\multicolumn{1}{r}{$c =$} & \multicolumn{1}{|l}{\texttt{\textbf{1101}000}} & \multicolumn{1}{|l}{\texttt{\textbf{01010}00}} & \multicolumn{1}{|l}{\texttt{\textbf{01}00000}} & \multicolumn{1}{|l|}{\texttt{\textbf{100}$p$000}} \\ \cline{2-5} 
                       & \multicolumn{1}{c}{$c_1$}     & \multicolumn{1}{c}{$c_2$}     & \multicolumn{1}{c}{$c_3$}     & \multicolumn{1}{c}{$c_4$}     
\end{tabular}}
\caption{Example of a SPC codeword. Each codeword symbol $c_i$ for $i<n$ ends with $l_i$ zero bits. (We write the least significant $l_i$ bits of each symbol using non-bold typeface.) The last symbol ends with $l_n - 1$ zero bits, preceded by one parity bit, denoted with $p$ in the figure. }
\label{fig:codeword}
\end{figure}

\if{false}
	In practice, small variations 
	in measurements can lead to a large Hamming distance between the correct and the 
	corrupted word. Similarly, large variations between samples might result in a 
	small Hamming distance. As an example, the Hamming distance between the 
	``natural'' binary representation of 191 and 192 is 7, while the Hamming distance 
	between 191 and 255 is 1. 
\fi

Therefore, we instantiate fuzzy commitments using a custom-designed error-correcting code inspired by codes in the Lee 
metric~\cite{lee58}. Let us define Lee weight and Lee distance as follows:
\begin{definition}[Lee weight]
The \emph{Lee weight} of element $x \in \Z_{2^d}$ is defined as $w_L(x) = abs(x')$, such that $x' \equiv x \bmod{2^d}$ and $-2^{d-1} < x' \leq 2^{d-1}$. 
The \emph{Lee weight} of vector $x = (x_1,\ldots, x_n) \in (\Z_{2^d})^n$ is defined as the sum of Lee weights of its elements, i.e., $w_L(x) = \sum_{i=1}^n{w_L(x_i)}$.
\end{definition}

\begin{definition}[Lee distance]
The \emph{Lee distance} of vectors $x, y \in \Z_{2^d}$ is the Lee weight of their difference, i.e., $d_L(x, y) = w_L(x - y)$. 
\end{definition}
\if{false}
Note that in $\Z_2$, the Lee weights is the same as the Hamming weight.
\fi

We consider each feature vector a (possibly perturbed) codeword of a code in Lee metric, and we 
use the {\em Lee weight} as a metric for distance between feature vectors. We 
refer to individual elements in a codeword (i.e., individual features) as {\em 
symbols}. 
Features are scaled by the standard deviation and discretized as  $s'_i = {\sf discretize}_{d, {\sf F}}(\sigma_i \cdot \kappa)$, where $\sigma_i$ is the standard deviation of feature $\sf F$. Then $s'_i$ is mapped to the closest power of two, which we indicate as $s_i$. The $l_i$ least significant bits of $c$ are zero in all codewords. (The last symbol has $l_n-1$ bits set to zero since the least significant bit is used for parity.)

Existing codes in the Lee metric only guarantee correct decoding when the Lee 
weight of the error is a small multiple of the number of codeword 
symbols~\cite{wuh08,rot06}. However, in our setting the number of codeword symbols 
is relatively small (i.e., between 20 and 100), while the domain for each feature 
is large (integers between $0$ and $2^{24}-1$ in our experiments). 
Therefore, we use group error-correcting code, which we refer to as a \emph{scaled 
parity code} (SPC). 
Let $n$ denote the number of features, $d$ be the discretization parameter, $\kappa$ the security parameter. A vector $c = (c_1, \ldots, c_n) \in (\mathbb{Z}_{2^{d}})^n$ is a SPC codeword iff it satisfies the following two 
conditions: (a) for all $i \in \{1, \ldots, n\}$, $s_i$ divides $c_i$; and (b) $
\sum_{i=1}^n c_i/s_i \equiv 0 \mod{2}$ (i.e., {\em parity condition}). Figure \ref{fig:codeword} shows the structure of a sample codeword.

If the parity condition is not met during decoding, we select $c'_k$ such that $|y_k-x_k|$ is assumed to be largest (after normalization) among all $|y_i-x_i|$. The parity is corrected by adding (subtracting) $s_k$ to $c'_k$ if $(y_k - \delta) - c'_k$ is positive (negative, respectively).

Our SPC is designed to guarantee that only feature vectors ``close'' to the user's 
template decommit to the correct codeword. 
When the number of codeword symbols is either one or two, the SPC algorithm 
decodes vectors to the closest codeword in the Lee metric.

\begin{theorem} \label{thm:lee}
Let $n \in \N$ be the number of features, $d \in \N$ the discretization parameter, $C \subset \Z_{2^d}^n$ a scaled parity code with scaling $s_1, ..., s_n$, $c \in C$ be a codeword and $\gamma = c + \epsilon$ where $\epsilon \in \Z_{2^d}^n$ is the error. If the sum of the two biggest relative errors is smaller than one, Algorithm \ref{fig:decoding} decodes $\gamma$ to $c$.
\end{theorem}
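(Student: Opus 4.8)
The plan is to trace the two phases of Algorithm~\ref{fig:decoding} — the per-symbol rounding to the nearest multiple of $s_i$, and, when the parity condition fails, the single-symbol parity correction described above — and to show that under the hypothesis the codeword $c$ is recovered either immediately after rounding or after exactly one correction step that is guaranteed to act on the right symbol. First I would fix notation: for $i = 1, \ldots, n$ let $r_i = w_L(\epsilon_i)/s_i$ be the relative error of symbol $i$, and, relabeling the features, assume $r_1 \ge r_2 \ge \ldots \ge r_n$, so that the hypothesis becomes $r_1 + r_2 < 1$. Two consequences are immediate: since the $r_i$ are nonnegative, $r_1 < 1$; and since $r_2 \le r_1$, we get $2 r_2 \le r_1 + r_2 < 1$, hence $r_i \le r_2 < 1/2$ for every $i \ge 2$.

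Next I would handle the rounding phase. Rounding $\gamma_i$ to the nearest multiple of $s_i$ returns $c_i$ whenever $r_i < 1/2$; by the previous step this covers every $i \ge 2$, so $c'_i = c_i$ for all such $i$. For symbol $1$ there are two cases. If $r_1 < 1/2$ as well, then $c'_1 = c_1$, so $c' = c$; moreover $\sum_i c'_i/s_i = \sum_i c_i/s_i \equiv 0 \pmod{2}$, the parity condition holds, and the algorithm outputs $c$. If $1/2 \le r_1 < 1$, then $\gamma_1$ lies between $c_1$ and the adjacent multiple $c_1 \pm s_1$ (the sign being that of $\epsilon_1$), so $c'_1$ is either $c_1$, in which case we finish exactly as before, or $c'_1 = c_1 \pm s_1$, i.e.\ $c'_1/s_1$ differs from $c_1/s_1$ by one unit.

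The remaining case — $c'_1 = c_1 \pm s_1$, all other symbols correct — is the core of the proof. Here $\sum_i c'_i/s_i \equiv \sum_i c_i/s_i \pm 1 \equiv 1 \pmod{2}$, so the parity condition fails and the correction step runs. After rounding, the normalized residual at symbol $1$ is $|\gamma_1 - c'_1|/s_1 = (s_1 - w_L(\epsilon_1))/s_1 = 1 - r_1$, while at each $i \ge 2$ it is still $|\gamma_i - c'_i|/s_i = r_i \le r_2$. The inequality $1 - r_1 > r_2$ is simply a rearrangement of the hypothesis, so symbol $1$ is the strict maximizer of the normalized residual and the algorithm selects $k = 1$. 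Finally, $\gamma_1 - c'_1 = \epsilon_1 \mp s_1$ has sign opposite to that of the shift that produced $c'_1$, so moving $c'_1$ by $s_1$ in the direction of $\gamma_1 - c'_1$ sends it back to $c_1$; thus $c' = c$, as claimed.

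The rounding estimates and the parity bookkeeping are routine; the delicate point — the one I expect to be the main obstacle — is verifying that the greedy rule used in the correction step (flip the symbol with the largest normalized residual) genuinely flips the mis-rounded symbol rather than a correctly decoded one. This is exactly why the theorem bounds the \emph{sum of the two largest} relative errors and not merely the largest one: it is precisely the condition that makes $1 - r_1$, the residual of the flipped symbol, strictly exceed $r_2$, the largest residual among the correctly rounded symbols. I would also note that, since all arithmetic is in $\Z_{2^d}$ and distances are measured by the Lee weight, ``nearest multiple'' and ``adjacent multiple'' of $s_1$ are unambiguous as long as $r_1 < 1$ and $s_1 \le 2^{d-1}$, so wrap-around never affects the two candidate multiples.
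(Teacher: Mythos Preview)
Your argument is correct and complete. The paper itself does not prove Theorem~\ref{thm:lee}: the only proof in the source is wrapped in \texttt{\textbackslash if\{false\}} and reads ``Left to the reader as an exercise,'' so there is nothing to compare your approach against. Your decomposition into (i) per-symbol rounding and (ii) parity repair is the natural one, and it matches the informal description the paper gives just before and after the theorem (including the commented-out remark that correct decommitment requires $\max_i(e_i/s_i) + \max_{j\neq i}(e_j/s_j) < 1$).

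The one point worth tightening is the boundary case $r_1 = 1/2$: the algorithm uses the strict test $e_i/s_i > 1/2$, so depending on the sign of $\epsilon_1$ the rounding step can land on either $c_1$ or $c_1 \pm s_1$. You already cover both outcomes by case analysis, so the argument goes through; just be explicit that the tie-breaking in the algorithm is what forces the ``either $c_1$ or $c_1 \pm s_1$'' dichotomy rather than a clean ``$r_1 < 1/2 \Rightarrow c_1$, $r_1 \ge 1/2 \Rightarrow c_1 \pm s_1$'' split. Your identification of the crux --- that $r_1 + r_2 < 1$ is exactly the inequality $1 - r_1 > r_2$ needed so that the argmax in line~11 selects the mis-rounded symbol and not a correctly decoded one --- is precisely the point of the theorem and is argued cleanly.
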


\if{false}
\begin{proof}
Left to the reader as an exercise.
\end{proof}
\fi

\if{false}
In particular, two vectors are close if at most two of their features have a distance larger than the threshold.
More precisely, let $x = (x_1, \ldots, x_n)$ be the committing vector, and the decommitting vector $y = (y_1, \ldots, y_n)$. $y$, perturbed with error $e = (e_1, \ldots, e_n)$ as $x = y + e$, decommits correctly if and only if $max_i(e_i/s_i) + max_{j \in \{1, \ldots, n\} \backslash \{i\}}(e_j/s_j) < 1$. The complete process is shown in Algorithm~\ref{fig:decoding}.
\fi

\begin{algorithm}[h]
\caption{\sc{Decommit Codewords in SPC}}	
\label{fig:decoding} \small
	\begin{algorithmic}[1]
	 \item on input $\gamma = (\gamma_1, \ldots, \gamma_n)$, \\scaling factors $s = (s_1, ..., s_n)$:
	 \FOR{each feature $i$}
	  \STATE $e_i = \gamma_i \bmod s_i$  	\hspace{0.1cm}{\tt //} $e_i$\; {\tt is error on feature } $i$
	  \IF{$e_i/s_i > 1/2$}
	   \STATE $e_i = e_i-s_i$
	  \ENDIF
	 \ENDFOR
	\STATE $c' = \gamma - e$						\hspace{2.15cm}{\tt //subtract error}
	\STATE $p = \sum_{i=1}^n(c'_i/s_i)$ 	\hspace{1.25cm}{\tt //check parity}
	\IF{$p \equiv 1 \pmod{2}$}
	 \STATE $k = argmax_i(|e_i/s_i|)$\\			{\tt //feature index with max. relative error}
	 \STATE $c'_k = c'_k + sign(e_k)s_k$
	\ENDIF
	\RETURN $c' = (c_1, \ldots, c_n)$
\end{algorithmic}
\end{algorithm}

\label{sec:protocol_lda}

\if{false}
The discriminability power and associated entropy of biometric features is 
crucial for minimizing {\em false accept rate} (FAR) and {\em false reject rate} (FRR) of the fuzzy 
commitments. In order to improve feature quality, we enhance the technique 
presented in Section~\ref{sec:ecc} using LDA (see Section~
\ref{sec:background_lda}). 
To the best of our 
knowledge, this is the first work that uses LDA to improve BKG performance. 
LDA can be used to improve performance of BKG based on other biometric 
modalities, which makes this contribution of independent interest.

While the BKG scheme introduced in the previous 
section does not require any additional information besides a user's biometric 
samples, LDA transformation must be computed over population biometric data. A 
straightforward way to implement the transformation is, therefore, to disclose 
biometric measurements for all users. This is clearly unacceptable, because: (1) if the same biometric data is also used for 
authentication, by disclosing individuals' samples an adversary can easily 
impersonate any of the users; and (2) more related to BKG, once users' samples 
have been disclosed, an adversary can trivially use these samples to decommit 
biometric keys. 
\fi

\paragraph{\bf Privacy-Preserving LDA.}
\label{sec:pp-lda}
To avoid releasing  individual users' biometrics, we designed a three-party 
\emph{privacy-preserving linear discriminant analysis} protocol,  illustrated in Figure \ref{fig:pp-lda}. The protocol is 
executed when new users enroll. The other two parties involved are the 
{\em enrollment server} (ES) and the {\em matrix publisher} (MP). The user generates biometric samples, encrypts them under the MP's public key and sends them to ES. ES stores the user's samples in encrypted form and computes, in conjunction with the user, the updated {\em encrypted} scatter within and scatter between matrices. The matrices are then sent to MP, which decrypts them and publishes the corresponding LDA matrix. We assume that MP does not collude with either ES or any user, as MP can 
decrypt any encrypted message. Interaction between the user and ES/MP is 
necessary only during enrollment. After that, the user is able to generate 
biometric keys using local data. 

When using LDA for fuzzy commitments, the transformation matrix is also necessary 
in order to recover the key (i.e., to decommit). LDA matrix and scaling factors are not user-specific, 
therefore they only reveal information about the overall population. 
However, to take full advantage of the population data -- especially to 
minimize FAR -- users should update their LDA matrix periodically 
to include data from new users. (Update interval depends on a number of factors, such as the number of users, how many users join the system in a given time interval, etc.) After the matrix is updated and published, biometric keys 
must be re-generated since a biometric signal used with a different transformation matrix cannot not be used to reconstruct the key.
During enrollment, each user $u$ holds a $m_u \times n$ matrix $X_u$ containing her $m_u$ training samples. The LDA algorithm creates a linear transformation that transforms the feature vectors to a space where the Fisher's Discriminant is maximal. Samples of all enrolled users are required to compute the transformation. 

Our protocol guarantees that the ES does not learn any information about the users' input. Similarly, a new user do not learn information about the biometrics of existing users.

When a new user joins the system, an updated version of the LDA matrix is 
generated. In order to prevent the adversary from extracting information on the new 
user by comparing two consecutive LDA matrices, the ES provides its output to the MP only after a pre-determined number of users $w\gg1$ have joined the system. This way, the adversary can only learn aggregate information of $w$ users.

\begin{figure*}[t]
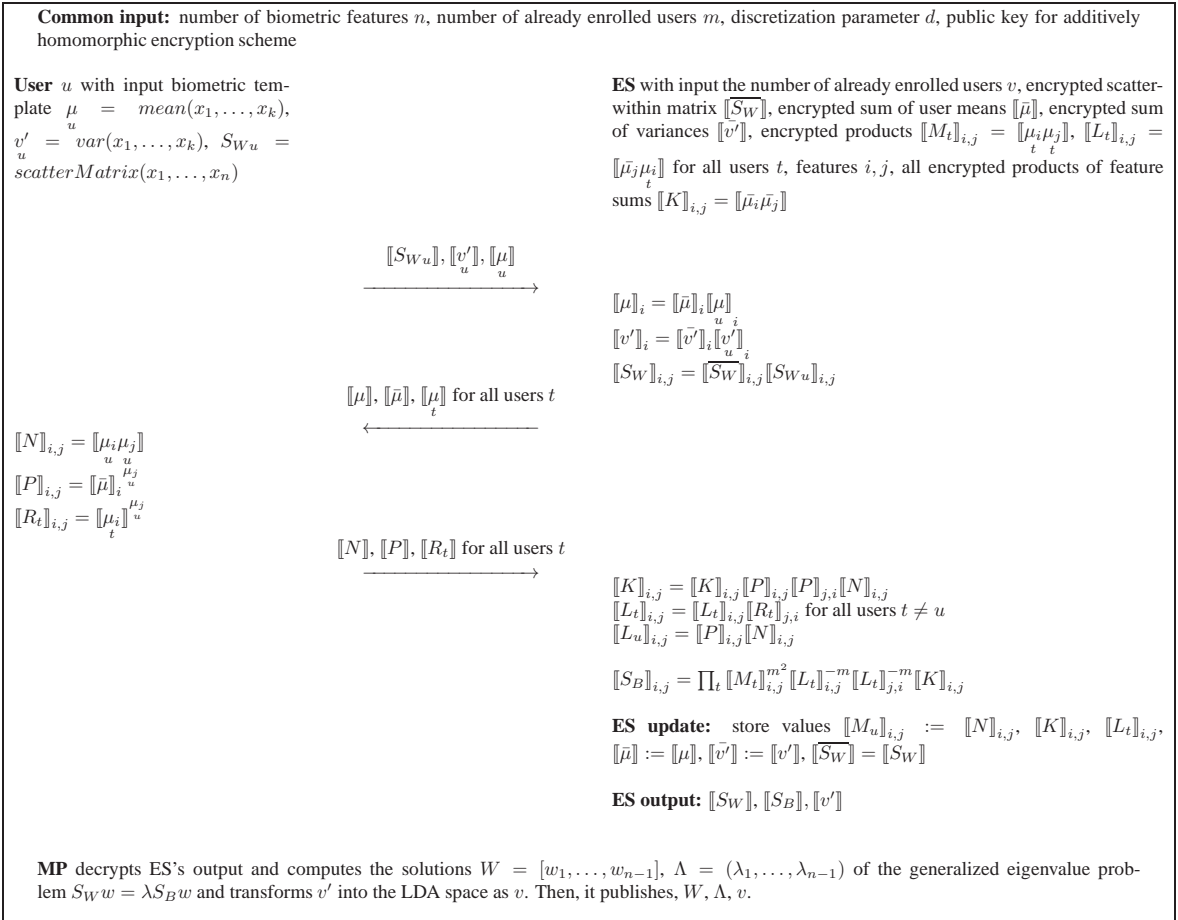

\centering
\resizebox{0.9\textwidth}{!}{
	\begin{tabular}{|p{\wleft} p{\wmid} p{\wright}|}
	\hline
		\multicolumn{3}{|c|}{ 
			\parbox{\wleft+\wmid+\wright}{ \vspace{0.1cm}{\bf Common input:} number of biometric features $n$, number of already enrolled users $m$, discretization parameter $d$, public key for additively homomorphic encryption scheme} }\\

		 \vspace{0.2cm}  {\bf User} $u$ with input biometric template $\underset{u}{\mu}=mean(x_1, \ldots, x_k)$, $\underset{u}{v'}=var(x_1, \ldots, x_k)$, $S_{Wu} = scatterMatrix(x_1, \ldots, x_n)$ & & 
		\vspace{0.2cm}
		{\bf ES} with input the number of already enrolled users $v$, encrypted scatter-within matrix $\enc{\overline{S_W}}$, encrypted sum of user means $\enc{\bar\mu}$, encrypted sum of variances $\enc{\bar{v'}}$, encrypted products $\enc{M_t}_{i,j} = \enc{\underset{t}{\mu_i} \underset{t}{\mu_j} }$, $\enc{L_t}_{i,j} = \enc{\bar{\mu_j} \underset{t}{\mu_i}}$ for all users $t$, features $i,j$, all encrypted products of feature sums $\enc{K}_{i,j} = \enc{\bar{\mu_i} \bar{\mu_j}}$ \\

		\vspace{0.05cm}
		\parbox{\wleft}{

		} &

		\parbox{\wmid}{ \centering \ \\ \ \\ $\enc{S_{Wu}}, \enc{\underset{u}{v'}}, \enc{\underset{u}{\mu}}$ \\$\xrightarrow{\hspace*{3cm}}$ }
		& \\

		& & 				
		\parbox{\wright}{
			$\enc{\mu}_i = \enc{\bar\mu}_i\enc{\underset{u}{\mu}}_i$ \\
			$\enc{v'}_i = \enc{\bar{v'}}_i\enc{\underset{u}{v'}}_i$ \\
			$\enc{S_W}_{i,j} = \enc{\overline{S_W}}_{i,j}\enc{S_{Wu}}_{i,j}$
		} \\	

		&	
		\parbox{\wmid}{ 
			\centering $\enc{\mu}$, $\enc{\bar\mu}$, $\enc{\underset{t}{\mu}}$ for all users $t$
			\\ $\xleftarrow{\hspace*{3cm}}$ 
		}
		& \\

		\parbox{\wleft}{
			$\enc{N}_{i,j}   =  \enc{\underset{u}{\mu_i} \underset{u}{\mu_j} }$  \\
			$\enc{P}_{i,j}   = {\enc{\bar{\mu}}_i}^{\underset{u}{\mu_j} }$ \\
			$\enc{R_t}_{i,j} = {\enc{\underset{t}{\mu_i}}}^{\underset{u}{\mu_j} }$  
		} & &
		\\ 

		&	
		\parbox{\wmid}{ \centering $\enc{N}$, $\enc{P}$, $\enc{R_t}$ for all users $t$ \\ $\xrightarrow{\hspace*{3cm}}$ }
		& \\

		& & \parbox{\wright}{

			$\enc{K}_{i,j} = \enc{K}_{i,j} \enc{P}_{i,j} \enc{P}_{j,i} \enc{N}_{i,j}$ \\				
			$\enc{L_t}_{i,j} = \enc{L_t}_{i,j} \enc{R_t}_{j,i}$ for all users $t \not= u$ \\
			$\enc{L_u}_{i,j} = \enc{P}_{i,j} \enc{N}_{i,j}$ \\		
			
			$\enc{S_B}_{i,j} = \prod_t \enc{M_t}_{i,j}^{m^2} \enc{L_t}_{i,j}^{-m} \enc{L_t}_{j,i}^{-m} \enc{K}_{i,j} $ \\

			{\bf ES update:} store values $\enc{M_u}_{i,j} := \enc{N}_{i,j}$, $\enc{K}_{i,j}$, $\enc{L_t}_{i,j}$, $\enc{\bar\mu} := \enc{\mu}$, $\enc{\bar{v'}} := \enc{v'}$, $\enc{\overline{S_W}} = \enc{S_W}$ \\
			
			{\bf ES output:} $\enc{S_W}$, $\enc{S_B}, \enc{v'}$ \\
			
		}
		\\

	\multicolumn{3}{|c|}{  
		\parbox{\wleft+\wmid+\wright}{ \ \\  \ \\ 
			\textbf{MP} decrypts ES's output and computes the solutions $W = [w_1, \ldots, w_{n-1}]$, $\Lambda = (\lambda_1, \ldots, \lambda_{n-1})$ of the generalized eigenvalue problem $S_W w = \lambda S_B w$ and transforms $v'$ into the LDA space as $v$. Then, it publishes, $W$, $\Lambda$, $v$. \ \\ 
		}  
	} \\
	
	\hline
	\end{tabular} }
	\caption{Privacy-Preserving LDA protocol for enrolling user $u$.}
	\label{fig:pp-lda}
\end{figure*}

\section{Security Analysis}
\label{sec:security}
To show that our BKG technique is secure, we separately prove that it meets the 
BKG requirements from \cite{bal08} -- namely, that 
cryptographic keys are indistinguishable from random 
given the commitment (key randomness), and that given a cryptographic key and a commitment, no 
useful information about the biometric can be reconstructed (biometric privacy). Then, we show that  the PP-LDA protocol is secure against a honest-but-curious adversary.

\if{true}
There are two main aspects for our biometric key generation that need to be addressed. First, the criteria created by Ballard \etal in \cite{bal08} can be used to analyze the security of any BKG, so we discuss them in our context in Section~\ref{sec:invertibility}. Second, the security of the PP-LDA protocol for updating the LDA transformation is analyzed in the honest-but-curious scenario is discussed in Sections~\ref{sec:pp-ldasecurity} and \ref{sec:leakage}.
\fi

\subsection{Key Randomness and Biometric Privacy}
\label{sec:invertibility}

In order to define security of biometric key generation systems, Ballard \etal \cite{bal08} introduced the notions of \emph{Key Randomness} (REQ-KR), \emph{Weak Biometric Privacy} (REQ-WBP) and \emph{Strong Biometric Privacy} (REQ-SBP). 
We assume that the biometric is unpredictable after revealing $l_i$ least significant bits of each feature. Because the least significant bits of $x$ are the most affected by noise, we argue that these bits do not leak information about the $d-l_i$ most significant bits of each feature.

\paragraph{Key Randomness} We formalize the notion of key randomness by defining Experiment $\operatorname{IND-KR}_\A(\kappa)$:

\medskip
\noindent 
{\bf Experiment $\operatorname{IND-KR}_\A(\kappa)$} 

\smallskip
\begin{compactenum}
	\item $\A$ is provided with a challenge $(\PRF_{c_i}(0), \delta)$, $k_b$ and $z$, where $k_0 = \PRF_{c_i}(z)$ and $k_1 \leftarrow_R \{0,1\}^\kappa$ for a bit $b \leftarrow_R \{0,1\}$, corresponding to user $i$.
	\item $\A$ is allowed to obtain biometric information $x_j$ for arbitrary users $j$ such that $j\neq i$.
	\item $\A$ outputs a bit $b'$ as its guess for $b$. The experiment outputs $1$ if $b = b'$, and $0$ otherwise. 
\end{compactenum}

\begin{definition}
We say that a biometric key generation scheme has the Key Randomness property if there exist a negligible function $\operatorname{negl}(\cdot)$ such that for any PPT $\A$, $\operatorname{Pr}[\operatorname{IND-KR}_\A(\kappa) = 1] \leq 1/2+\operatorname{negl}(\kappa)$.
\end{definition}

\begin{theorem} \label{thm:key_randomness}
Assuming that the $\PRF$ is a pseudo-random function family and that biometric $X = (x_1, ..., x_n)$ is unpredictable given $l_i$ least significant bits of each feature $i$, our Fuzzy Commitment scheme has the Key Randomness property.
\label{thm:REQ-KR}
\end{theorem}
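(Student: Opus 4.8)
The plan is to reduce Key Randomness to the security of the $\PRF$ via a standard hybrid argument, using the unpredictability of the biometric to show that the $\PRF$ key $c$ remains hidden from the adversary. First I would observe that the only place the codeword $c_i$ enters the challenge is through $\PRF_{c_i}(0)$, $\delta = x - c_i$, and (when $b=0$) through $k_0 = \PRF_{c_i}(z)$. The component $\delta$ reveals $x$ only relative to $c_i$: since $s_j \mid (c_i)_j$ for every symbol $j$, the value $\delta_j = x_j - (c_i)_j$ determines exactly the $l_j$ least significant bits of $x_j$ and nothing more about the $d - l_j$ high-order bits (those are masked by the uniformly random choice of $(c_i)_j / s_j$). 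Hence the information the adversary obtains from $\delta$ about $x$ is precisely the $l_j$ least significant bits of each feature, which is exactly the leakage under which we assume $X$ is unpredictable.

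Next I would set up the hybrid. In Game $0$ the adversary plays $\operatorname{IND-KR}_\A(\kappa)$ as written. In Game $1$ I replace $\PRF_{c_i}(\cdot)$ by a truly random function $R(\cdot)$: the challenge becomes $(R(0), \delta)$ with $k_0 = R(z)$, and since $z \neq 0$ the value $R(z)$ is independently uniform in $\{0,1\}^\kappa$, so in Game $1$ the distributions of $k_0$ and $k_1$ are identical and $\Pr[\text{Game }1 = 1] = 1/2$ exactly. It remains to bound $|\Pr[\text{Game }0 = 1] - \Pr[\text{Game }1 = 1]|$ by a negligible function. I do this by a reduction $\B$ to $\PRF$ security: $\B$ receives an oracle $\mathcal{O}$ (either $\PRF_{c}$ for random $c$ or a random function), picks $b \leftarrow_R \{0,1\}$, picks the target user's biometric $x$, computes $\delta_j = x_j - c_j$ where $c_j$ is obtained from the $l_j$ high-order-zero structure together with fresh random high bits (equivalently, $\B$ simply samples $\delta$ consistent with the distribution of $x$ and the fixed low-bit structure of codewords, which it can do without knowing $c$ beyond its low bits, i.e. beyond $0$), queries $\mathcal{O}(0)$ and $\mathcal{O}(z)$, forms the challenge, answers $\A$'s biometric queries for $j \neq i$ honestly, and outputs whatever distinguishing bit $\A$ yields. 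The subtlety here is that $\B$ must generate $\delta$ with the correct distribution without access to $c$; this is where unpredictability is needed — I would argue that conditioned on the low $l_j$ bits of $x$ (which $\B$ can sample from the public feature distribution or treat as auxiliary information $z$), $\B$'s simulated $\delta$ is statistically indistinguishable from the real one, because the real $\delta$ masks the high bits of $x$ with the uniformly random high bits of $c$.

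The main obstacle I anticipate is making the last point fully rigorous: in the real experiment $c_i$ is the $\PRF$ key, so its high-order bits are simultaneously "the mask in $\delta$" and "the secret in $\PRF_{c_i}$," and the reduction to $\PRF$ security needs $c_i$ to be an unknown uniform key while still letting $\B$ produce a correctly-distributed $\delta$. The clean way to handle this is to first pass through an intermediate hybrid in which $\delta$ is replaced by a value that hides $x$ entirely given only its $l_j$ least significant bits — justified directly by the unpredictability assumption on $X$ (Definition~\ref{def:unpredictable}), with $x$'s low bits and the adversary's other biometric queries folded into the auxiliary information $z$ — and only then invoke $\PRF$ security on a key about which the adversary now has no information beyond $\PRF$ oracle outputs at $0$. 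Chaining the three indistinguishability steps (real $\leftrightarrow$ $\delta$-hiding $\leftrightarrow$ $\PRF$-as-random) and noting the final game gives advantage exactly $0$ yields $\Pr[\operatorname{IND-KR}_\A(\kappa)=1] \leq 1/2 + \operatorname{negl}(\kappa)$, completing the proof.
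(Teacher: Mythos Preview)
Your proposal is correct and follows essentially the same approach as the paper's proof sketch: both observe that $\delta$ leaks only the $l_i$ least significant bits of each $x_i$ (so that, by the unpredictability assumption, $c = x - \delta$ remains hidden), and then reduce to $\PRF$ security by having the simulator sample an independent $\delta'$ with the correct marginal distribution while obtaining $\PRF_c(0)$ and $\PRF_c(z)$ from its oracle. Your explicit hybrid structure and your isolation of the ``$\delta$-simulation'' subtlety are more careful than the paper's sketch, but the underlying argument is the same.
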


\begin{proof}[Proof of Theorem \ref{thm:key_randomness} (Sketch)]
Let $\C$ be a set of codewords such that $|\C| = 2^{\sum_{i=1}^n(d-l_i)}$ and the least 
significant $l_i$ bits of each symbol $i$ of all codewords in $\C$ are 0.
If $c$ is selected uniformly from $\C$, the most significant $d-l_i$ bits in each codeword symbol $i$ are uniformly distributed in $\{0,1\}^{d-l_i}$.
Since $x$ is unpredictable given the least significant $l_i$ bits in each 
feature and the most significant $d-l_i$ bits of each symbol $c_i$ are uniformly distributed, we 
have that $x$ is unpredictable given $\delta$. Because $c = x - \delta$, $c$ is  
unpredictable given $\delta$. 

We now show that any PPT adversary $\A$ that has advantage $1/2+\Delta(\kappa)$ to win the $\operatorname{IND-KR_\A(\kappa)}$ experiment can be used to construct a distinguisher $\D$ that has similar advantage in distinguishing $\PRF$ from a family of uniformly distributed random functions.

$\D$ is given access to oracle $O(\cdot)$ that selects a random codeword $c$ and a random bit $b$, and responds to a query $q$ with random (consistent) values if $b=1$, and with $\PRF_c(q)$ if $b=0$. 
$\D$ selects a random $z$, $c' \leftarrow \C$ and $x' \leftarrow \X$, and sets $\delta' = x' - c'$. Then $\D$ sends $(O(0), \delta'), O(z)$ to $\A$. %

$c$ is unpredictable given the least significant $l_i$ bits of each codeword symbol, and $\delta'$ is chosen from the same distribution as $\delta$. %
If $b=0$, $(\PRF_{c}(0), \delta), \PRF_c(z)$ is indistinguishable from $(\PRF_{c'}(0), \delta'), \PRF_{c'}(z)$, because the $\delta$ and $\delta'$ follow the same distribution, $c$ and $c'$ are unpredictable given $\delta$ and thus the output of both $\PRF_{c}(\cdot)$ and $\PRF_{c'}(\cdot)$ are indistinguishable from random. 
If $b=1$, then $O(\cdot)$ is a random oracle, so $(O(0), \delta'), O(1)$ is indistinguishable from pair $(\PRF_c(0), \delta), \PRF_c(1)$. $\delta$ and $\delta'$ are chosen from the same distribution and $c$ is unpredictable given $\delta$, so $\PRF_c(\cdot)$ is indistinguishable from random. 
 
Eventually, $\A$ outputs $b'$, and $D$ outputs  $b'$ as its guess. It is easy to see that $\D$ wins iff $\A$ wins, so $\D$ is correct with probability $1/2+\Delta(\kappa)$. Therefore, if $\Delta(\cdot)$ is non-negligible, $D$ can distinguish $\PRF$ from a random function with non-negligible advantage over 1/2. However, this violates the security of the PRF; hence, $\A$ cannot exist.
\end{proof}

\paragraph{{\em Weak} and {\em Strong} Biometric Privacy.}
REQ-WBP states that the adversary learns no useful information about a biometric 
signal from the commitment and the auxiliary information, while REQ-SBP states 
that the adversary learns no useful information about the biometric given 
auxiliary information, the commitment and the key.
It is easy to see that, in our BKG algorithms, strong biometric privacy implies 
weak biometric privacy: key $k$ is computed as $\PRF_c(z)$; since the adversary 
knows $\PRF_c(0)$ as part of the commitment, $\PRF_c(z)$ does not add useful 
information. 

We assume that the adversary has access to all public information -- i.e., the 
LDA matrix, the vector of aggregate variances in the LDA space and all system 
parameters --  and user-specific information such as the commitment $(\PRF_c(0), 
\delta)$, a list of keys computed as $k_i = \PRF_{c}(z_i)$ and a list of values 
$z_i$.

Since the output of $\PRF_c$ does not reveal $c$, $\PRF_c(0)$ and $k_i$-s do not 
disclose information about $x$. On the other hand, $\delta$ reveals the least 
significant $l_i$ bits of $x$. However, $x$ is unpredictable given its $l_i$ lest 
significant bits. Therefore, $x$ cannot be reconstructed from $\PRF_c(0)$, 
$k_i$ and $\delta$. Since $c$ is uniformly distributed in $\C$, $\delta$ does not 
reveal information about the most significant $d-l_i$ bits of $x$. Moreover, the $l_i$ least significant bits of $x$ are highly perturbed by noise and therefore do not reveal useful information about the biometric signal.

%
%
%
%

\subsection{Security of LDA Protocol}
\label{sec:pp-ldasecurity}

We argue that the protocol in Figure~\ref{fig:pp-lda} is secure, i.e., that ES 
does not learn any information about a specific user biometric, and that MP only 
learns $S_B$ and $S_W$. 
In particular, the user does not possesses the decryption key for the homomorphic 
encryption, and all messages from ES are encrypted. Since the encryption scheme is  
semantically secure, the user cannot extract any information from the protocol 
execution. 

When interacting with the user, ES's view of the protocol consists in the 
encrypted values from the user, encrypted values from previous runs of the 
protocol and the number of users $v$. The output of the server is $\enc{S_W}$, $
\enc{S_B}$ and $\enc{v'}$. Because of the semantic security of the encryption 
scheme, ES cannot tell if the latter three values are replaced with encryptions of 
random elements. Therefore, the protocol does not reveal any information to ES.

During its interaction with MP, ES does not learn any additional information, 
because it does not receive any message from MP. MP receives encrypted values
$\enc{S_W}$, $\enc{S_B}, \enc{v'}$,
that is able to decrypt. As we argue next, ${S_W}$, $
{S_B}$ and ${v}$ do not leak information about a specific user if computer over a 
{\em set} of users.

\subsection{Information Leakage through $S_W$ and $S_B$}
\label{sec:leakage}

Individual $S_{Wu}$ reveal significant information about a single user's 
biometric. For example, they leak feature variance and correlation between features for $u$. However, by 
averaging all users' {\em scatter within} matrices into $S_{W}$, information 
about single users is no longer available. In particular, the larger the number 
of users, and more uniform their selection, the closer $S_{W}$ will be to the 
value corresponding to the general population, which is assumed to be known.
The same argument applies to both $S_{B}$ and $\enc{\sigma'}$.

However, two tuples of elements $({S_W}^{t_1},{S_B}^{t_1},{\sigma'}^{t_1})$ and $({S_W}^{t_2},
{S_B}^{t_2},{\sigma'}^{t_2})$ generated at different points in time $t_1, t_2$ reveal 
aggregate information corresponding to the users who enrolled between $t_1$ and 
$t_2$. If only a single user enrolls between $t_1$ and $t_2$, then it is possible to 
reconstruct $S_{Wu}$ as ${S_W}^{t_2}-{S_W}^{t_1}$. 
Therefore, in order to prevent this attack, ${S_W}$, ${S_B}$ and ${v'}$ should be updated in batches.

\section{Experimental Evaluation}
\label{sec:evaluation}

\begin{table*}[htbp]
\centering \small
\begin{tabular}{lc|llll|llll}
                          & 					& \multicolumn{4}{c}{keyhold+digraph}       & \multicolumn{4}{c}{keyhold only}           \\
                          & minutes			& entropy 				& FAR    & FRR    & availability & entropy & FAR     & FRR     & availability	\\ \hline
\multirow{2}{*}{with LDA} & 4 						& 99.95\% 		& 5.6\%  & 6.8\% 	& 81.5\%      & 97.7\% 	& 6.9\%  & 14.1\% & 94.4\%  		\\
							 & 8 						& 99.95\%		& 5.5\%  & 3.6\% 	& 98.3\%      & 96.8\% 	& 7.7\%  & 8.0\% & 99.6\%  			\\ \hline 
\multirow{2}{*}{w/o LDA}  & 4 						& 81.9\%    	& 9.2\%  & 9.8\% 	& 82.7\%      & 62.1\% & 12.7\% & 15.1\% & 94.4\%      	\\ 
							 & 8 						& 87.6\% 		& 6.6\%  & 6.9\% 	& 99.0\%      & 67.7\% & 11.3\%  & 10.1\% & 99.7\%  		\\
\end{tabular}
\caption{BKG results. Whole training session was used for creating commitments, 4-minute and 8-minute slices from testing session were used for key retrieval. We report availability as the percentage of time slices that contain at least two vectors with all required features.}
\label{fig:results-eer}
\end{table*}

\begin{table*}[ht]
\centering \small
\begin{tabular}{ll|lll|ll}
& & \multicolumn{3}{c}{computation} & \multicolumn{2}{c}{communication} \\           
features & users & user & ES & MP & user-ES & ES-MP \\ \hline
23       & 250   & 13 min 39 s & 40	min	43	s & 4 s & 17 MB & 135 KB \\
23       & 500   & 27 	min	9	s & 81 min	14	s & 4 s & 34 MB & 135 KB \\
31       & 250   & 26 	min	22	s & 78	min	42	s & 8 s & 33 MB & 260 KB \\
31       & 500   & 52	 min	33	s & 157	 min 16	s & 8 s & 65 MB & 260 KB
\end{tabular}
\caption{Computational and communication overhead for PP-LDA protocol.}
\label{fig:pplda-overhead}
\end{table*}

In order to quantify the biometric performance and key reconstruction reliability 
of our BKG technique, we performed free-text typing experiments on 486 volunteer 
subjects. Each subject was asked to answer between 10 and 13 questions, typing 
{\em at least} 300 character in each answer. Data was collected in two separate 
45-to-120 minute sessions, which took place on different days. Experiments were 
performed using a custom Java GUI, which recorded keystroke with a 15.625 ms 
resolution, and on a standard QWERTY keyboard.

We used two feature subsets of features: the 23 most available \emph{keyhold features},\footnote{`Spacebar', `E', `O', `I', `A', `S', `H', `N', `R', `T', `L', `D', `U', `Y', `W', `G', `P', `C', `M', `B', `F', `V', `K'.} and keyhold features supplemented with $9$ most available \emph{digraph features}.\footnote{`HE', `IN', `TH', `ER', `AN', `RE', `EN', `ND', `HA'.} These features were chosen based on the availability in the first session. We then removed outliers by deleting all feature values higher than 500 ms. Finally, we discretized each feature in the range from 0 to 500 ms.

Data from the first session was used to create the commitments (biometric keys). For each user, we obtained per-feature variance and mean from the whole session. We used the mean to commit to the user's cryptographic key (see Section~\ref{sec:techniques}), and the per-user variance to compute the global variance. 

\subsection{False Accept/Reject and Availability}
A standard metric for evaluating biometric systems is \emph{equal error rate} 
(EER), which is defined as the value that FAR and FRR assume when they are equal. 
In our scenario, we have a false reject when a user's biometric fails to decommit 
the user's cryptographic material -- i.e., when the biometric sample is not {\em 
close enough} to the original biometric. False accept is defined as the event when 
a user's biometric can be used to successfully decommit another user's 
cryptographic information.

When dealing with discrete systems, FAR and FRR might never assume the same value. 
In this case, we approximated EER by reporting both FAR and FRR at their minimum 
distance. Entropy is reported as the percentage of maximum Shannon entropy of 
discretized user templates in our dataset available through our BKG algorithm.

To evaluate FAR without LDA, we implemented a \emph{zero-effort} impostor attack. 
This attack consists in employing a user's biometric to decommit other users' 
cryptographic keys. With LDA we used {\em cross-validation} instead of zero-effort 
attacks. We enrolled all users except for one, which we refer to as {\em 
impostor}. We then used impostor biometric data to attempt to decommit enrolled 
users' biometric keys. We repeated this experiments for each user, so that all of 
them could act as impostor. Using impostors that were not enrolled in the system 
gives better chance to succeeding in the attack, as the LDA transformation 
can maximize separation among users that are enrolled in the system.

Results are presented in Table \ref{fig:results-eer} for both keyhold features 
alone and for keyhold with digraph. The results clearly show that using LDA 
improves both FAR/FRR and entropy. With 4-minute slices of testing data and using 
both keyhold and digraph features, LDA improved FAR-FRR from 9.2-9.8\% to 5.6-6.8\%.
With 8-minute slices, the FAR-FRR improved from 6.6-6.9\% to 5.5-3.6\% using LDA.

One important issue to address is availability of the biometric. Each feature used 
for generating the key must also be used for decommitting, as both LDA and our 
error-correcting code cannot handle erasures. Our results show that 4-minute (8-minute)
sample of keystrokes carries all the required information with probability greater than 
94\% (99\%, respectively) for keystroke only, and over 81\% (98\%, respectively) samples have all the 
required keystrokes and digraphs. The FAR/FRR results are improved when more 
keystrokes/digraphs are available for each feature, at the cost of lower 
availability. Both results for 1 minimum sample and 2 minimum samples are provided 
in Table \ref{fig:results-eer}. 

	\subsection{Computational Overhead of PP-LDA}
	The overhead in our privacy-preserving protocol is dominated by encryptions, 
decryptions and operations in the encrypted domain. We instantiated our protocol using the DKG \cite{dam08,dam08cor} cryptosystem with 1024bit key, 160bit subgroup size and 65bit plaintext size. We run our Java single-threaded prototype implementation on a desktop computer with Intel Xeon E5606 CPU at 2.13 GHz with 48 GB RAM running on Windows 7. 

The amount of computation and communication depends on the number of features, indicated with $n$, and the number of enrolled users $m$. During the protocol, a new user performs $O(n^2)$ encryptions and $O(m n^2)$ exponentiations. The enrollment server computes $O(m n^2)$ multiplications and exponentiations, and the matrix publisher does $O(n^2)$ decryptions. The overall amount of communication in both rounds is $O(m n^2)$ between the user and enrollment server and $O(n^2)$ between the ES and MP. 

The overhead of PP-LDA is reported in Table \ref{fig:pplda-overhead}. 
Because both the computation and communication depend on multiple parameters, we report representative parameter combinations corresponding to our settings. 

Our experiments confirm that the cost of the PP-LDA protocol is relatively small, since the protocol is only executed once for each new user.

\section{Related Work}
\label{sec:related}

\paragraph{BKG based on Behavioral Biometrics.}
Monrose \etal~\cite{mon01} evaluate the performance of BKG based on spoken password using data from 50 users. They report a false-negative rate of 4\%. 

Handwritten signature is another behavioral modality, where biometric key generation has been studied. Multiple papers, for example by Freire \etal \cite{fre06}, Feng \etal \cite{fen02} and more recently Scheuermann \etal \cite{sch11} evaluate the performance. The dataset sizes for the first three papers are 330, 25 and 144 users; the last paper does not include the number of users. The false accept/false reject rates presented are 57\% FRR and 1.7\% FAR in \cite{fre06} with skilled forgeries, 8\% EER in \cite{fen02}. 

\paragraph{BKG based on Physical Biometrics}
Physical biometrics have also been used for biometric key generation, evaluated on fingerprints by Clancy \cite{clancy2003secure} \etal, Uludag \etal \cite{fuzzy2}, Sy and Krishnan \cite{sy12} and others. BKG on iris was studied by Rathgeb and Uhl \cite{rathgeb2010privacy}, \cite{rathgeb2009iris} and Wu \etal \cite{wu2008novel}, and on face images by Chen \etal \cite{che07}. 

\paragraph{Passwords Hardening}
In \cite{keystroke-BKG}, Monrose \etal use keystroke timing for increasing entropy (or ``hardening'') users passwords. Their technique extracts entropy from keystroke data, but does not use free-text and also does not generate keys.

\paragraph{Privacy-Preserving LDA.}
The problem of computing LDA on horizontally and vertically partitioned data has 
been addressed in~\cite{han08} by Han and Ng. However, their protocol is not 
suitable in our setting. In particular, their technique addresses the problem 
where two parties have different partitions of a dataset, and want to compute a 
joint LDA matrix. In our scenario, however, we have many parties -- i.e., the  
users -- who want to compute a common LDA matrix.

\section{Conclusion}
\label{sec:conclusion}
Biometric key generation is an important and general primitive that can be used -- 
among other things -- for authentication, encryption and access control. In this 
paper we present the first BKG algorithm suitable for continuous authentication, 
based on keystroke dynamics. Our algorithm uses LDA to improve reliability and 
biometric performance. We therefore designed and implemented a secure privacy-
preserving protocol for computing and updating LDA parameters using all users' 
biometric signals. 

Biometric performance and computational overhead of our techniques are evaluated 
on a prototype implementation. Our experiments show that our BKG technique has low 
error rate (between 3.6\% and 5.5\%), and presents limited overhead.

{\footnotesize
\bibliographystyle{ieee}
\bibliography{references}

\begin{thebibliography}{10}\itemsep=-1pt

\bibitem{bal08}
L.~Ballard, S.~Kamara, and M.~Reiter.
\newblock The practical subtleties of biometric key generation.
\newblock In {\em USENIX Security Symposium}, 2008.

\bibitem{che07}
B.~Chen and V.~Chandran.
\newblock Biometric based cryptographic key generation from faces.
\newblock In {\em Digital Image Computing Techniques and Applications}, 2007.

\bibitem{clancy2003secure}
T.~Clancy, N.~Kiyavash, and D.~Lin.
\newblock Secure smartcardbased fingerprint authentication.
\newblock In {\em ACM SIGMM Workshop on Biometrics Methods and Applications},
  2003.

\bibitem{dam08cor}
I.~Damg{\aa}rd, M.~Geisler, and M.~Kr{\o}ig{\aa}rd.
\newblock A correction to efficient and secure comparison for on-line auctions.
\newblock Cryptology ePrint Archive, Report 2008/321, 2008.

\bibitem{dam08}
I.~Damg{\aa}rd, M.~Geisler, and M.~Kr{\o}ig{\aa}rd.
\newblock Homomorphic encryption and secure comparison.
\newblock {\em IJACT}, 2008.

\bibitem{fen02}
H.~Feng and C.~Wah.
\newblock Private key generation from on-line handwritten signatures.
\newblock {\em Information Management \& Computer Security}, 10(4), 2002.

\bibitem{fis36}
R.~Fisher.
\newblock The use of multiple measurements in taxonomic problems.
\newblock {\em Annals of Eugenics}, 7(7), 1936.

\bibitem{fre06}
M.~Freire-Santos, J.~Fierrez-Aguilar, and J.~Ortega-Garcia.
\newblock Cryptographic key generation using handwritten signature.
\newblock In {\em Defense and Security Symposium}, 2006.

\bibitem{han08}
S.~Han and W.~Ng.
\newblock Privacy-preserving linear fisher discriminant analysis.
\newblock In {\em PAKDD}, 2008.

\bibitem{fuzzy_commitment}
A.~Juels and M.~Wattenberg.
\newblock A fuzzy commitment scheme.
\newblock In {\em CCS}, 1999.

\bibitem{lee58}
C.~Lee.
\newblock Some properties of nonbinary error-correcting codes.
\newblock {\em IRE Transactions on Information Theory}, 4(2), 1958.

\bibitem{mon01}
F.~Monrose, M.~Reiter, Q.~Li, and S.~Wetzel.
\newblock Cryptographic key generation from voice.
\newblock In {\em S\&P}, 2001.

\bibitem{keystroke-BKG}
F.~Monrose, M.~Reiter, and S.~Wetzel.
\newblock Password hardening based on keystroke dynamics.
\newblock {\em Int. J. Inf. Sec.}, 1(2), 2002.

\bibitem{unpredictable}
M.~Naor and O.~Reingold.
\newblock From unpredictability to indistinguishability: A simple construction
  of pseudo-random functions from {MAC}s.
\newblock In {\em CRYPTO}, 1998.

\bibitem{rathgeb2009iris}
C.~Rathgeb and A.~Uhl.
\newblock An iris-based interval-mapping scheme for biometric key generation.
\newblock In {\em ISPA}, 2009.

\bibitem{rathgeb2010privacy}
C.~Rathgeb and A.~Uhl.
\newblock Privacy preserving key generation for iris biometrics.
\newblock In {\em Communications and Multimedia Security}, 2010.

\bibitem{rot06}
R.~Roth.
\newblock {\em Introduction to Coding Theory}.
\newblock Cambridge University Press, New York, NY, USA, 2006.

\bibitem{sch11}
D.~Scheuermann, B.~Wolfgruber, and O.~Henniger.
\newblock On biometric key generation from handwritten signatures.
\newblock In {\em BIOSIG}, 2011.

\bibitem{sy12}
B.~Sy and A.~Krishnan.
\newblock Generation of cryptographic keys from personal biometrics: An
  illustration based on fingerprints.
\newblock In J.~Yang and S.~Xie, editors, {\em New Trends and Developments in
  Biometrics}. InTech, 2012.

\bibitem{fuzzy2}
U.~Uludag, S.~Pankanti, and A.~Jain.
\newblock Fuzzy vault for fingerprints.
\newblock In {\em AVBPA}, 2005.

\bibitem{wu2008novel}
X.~Wu, N.~Qi, K.~Wang, and D.~Zhang.
\newblock A novel cryptosystem based on iris key generation.
\newblock In {\em International Conference on Natural Computation}, 2008.

\bibitem{wuh08}
Y.~Wu and C.~Hadjicostis.
\newblock Decoding algorithm and architecture for {BCH} codes under the {L}ee
  metric.
\newblock {\em Transactions on Communications}, 56(12), 2008.

\end{thebibliography}
}

\end{document}